\newcommand{\ud}{\mathrm{d}}
\newcommand{\sumfrac}[2]{\genfrac{}{}{0pt}{}{#1}{#2}}
\newcommand{\Vx}{\underline{x}}
\newcommand{\Va}{\underline{a}}
\newcommand{\Vb}{\underline{b}}
\newcommand{\Vr}{\underline{r}}
\newcommand{\Vy}{\underline{y}}
\newcommand{\Vz}{\underline{z}}
\newcommand{\Vl}{\underline{l}}
\newcommand{\irott}{\mathcal}
\renewcommand{\Pr}{Pr}
 \DeclareMathOperator{\I}{I}
\DeclareMathOperator{\Hh}{H}
\DeclareMathOperator{\argmax}{argmax}
\DeclareMathOperator{\argmin}{argmin}
\newcommand{\integ}{\int\limits} 
\theoremstyle{change}
\newtheorem{Def}{Definition}
\theoremstyle{break}
\newtheorem{remark}{Remark}
\theoremstyle{plain}
\newtheorem{Tet}{Theorem}
\theoremstyle{break}
\newtheorem{Lem}{Lemma}
\title{Blind decoding of Linear Gaussian channels with ISI, capacity, error exponent, universality}
\author{Farkas Lóránt\thanks{Thanks for Imre Csiszár for his numerous correction in this work.}}
\begin{document}

\maketitle

\begin{abstract}
 A new straightforward universal blind detection algorithm for linear Gaussian channel with ISI is given. A new error exponent is derived, which is better than Gallager's random coding error exponent. 
\end{abstract}

\section{Introduction}

In this paper, the discrete Gaussian channel with intersymbol interference (ISI)
\begin{align}
  y_i= \sum_{j=0}^l x_{i-j}h_j+z_i  \label{modell}
 \end{align}
will be considered, where the vector $h=(h_0,h_1,\dots,h_l)$ represents the ISI, and $\{z_i\}$ is white Gaussian noise with variance $\sigma^2$

A similar continuous time model has been studied in Gallager \cite{Gallager}. He showed that it could be reduced to the form
\begin{align}
y_n =v_n x_n + w_n  \label{eleje}
\end{align}
where the $v_n$ are eigenvalues of the correlation operator. The same is true also for the discrete model (\ref{modell}), but the reduction requires knowledge of the covariance matrix $R(i,j)=\sum_{k=0}^l h(k-i)h(k-j)$ whose eigenvectors should be used as new basis vectors.

Here however such knowledge will not be assumed, our goal is to study universal coding for the class of ISI channels of form (\ref{modell}).

As a motivation , note that the alternate method of first identifying the channel by transmitting a known ``training sequences'' has some drawbacks.
Because the length of the training sequence is limited, the estimation of the channel can be imprecise, and the data sequence is thus decoded according to an incorrect likelihood function. This results in an increase in error rates \cite{incorr_error_exp1}, \cite{incorr_error_exp2} and in a decrease in capacity \cite{incorr_error_cap1}. As the training sequence contains no valuable information, the longer it is the less information bits can be carried.

One can think this problem could be solved by choosing the training sequence sufficiently large to ensure precise channel estimation, and then choose the data block sufficiently long, but this solution seldom works due to the delay constraint, and to the slow change in time of the channel.

So we will give a straightforward method of coding and decoding, without any Channel Side Information (CSI) 
To achieve this, we generalise the result of Csiszár and Körner \cite{Csiszar} to Gaussian channel with white noise and ISI, using an analogue of the Maximal Mutual Information (MMI) decoder in \cite{Csiszar}. 

We will show that our new method is not only universal, not depending on the channel, but its error exponent is better in many cases than Gallager's \cite{Gallager} lower bound for the case when complete CSI is available to the receiver.
Previously, Gallager's error exponent has been improved for some channels, using an MMI decoder, such as for discrete memoryless multiple-access channels \cite{Pokorny/Wallmeier}.

We don't use Generalised Maximum likelihood decoding \cite{Ziv}, but a generalised version of MMI decoding. This is done by firstly approximating the channel parameters by maximum likelihood estimation, and then adopt the message whose mutual information with the estimated parameters is maximal. By using an extension of the powerful method of types, we can simply derive the capacity region, and random coding error exponent. At the end, we have a more general result, namely: We show how the method of types can be extended to a continuous, non memoryless environment.

The structure of this correspondence is as follows. In Section \ref{Definitions} we generalise typical sequences to ISI channel environment. The main goal of section \ref{Theorem} is to give a new method of blind detection. In section \ref{numericalr} we show by numerical results, that for some parameters the new error exponent is better than Gallager's random coding error exponent. In Section \ref{Discussion} we discuss the result, and give a general formula to the channels with fading.
\section{Definitions} \label{Definitions}
Let $\gamma_n$ be a sequence of positive numbers with limit 0. The sequence $\Vx \in \mathbb R^n$ is $\gamma_n$-typical to an n-dimensional continuous distribution $P$, denoted by $\Vx \in \irott{T}_P$, if
\begin{equation}
 \frac{\left| - \log(p(\Vx)) - H(P)  \right|}{n} < \gamma_n \label{tipikussag}
\end{equation}
where $p(\Vx)$ denotes the density function, and $H(P)$ the differential entropy of $P$.

Similarly sequences $\Vx \in \mathbb R^n$,$\Vy \in \mathbb R^{n+l}$ are jointly $\gamma_n$-typical to $2n+l$ dimensional joint random distribution $P_{X,Y}$ denoted by $(\Vx,\Vy) \in \irott{T}_{P_{XY}}$, if
\[ \left| - \log(p_{X,Y}(\Vy,\Vx)) - H(p_{X,Y})  \right| < n \gamma_n \]

In the same way, a sequence $\Vy$ is $\gamma_n$-typical to the conditional distribution $P_{Y|X}$, given that $X=\Vx$, denoted by $\Vy \in \irott{T}_{P_{Y|X}}(\Vx)$ if
\[ \left| - \log(p_{Y|X}(\Vy|\Vx))- H(Y|X)  \right| < n \gamma_n \]

For simplifying the proof, in the following, $P_{X}=P$ is always the $n$ dimensional i.i.d. standard normal distribution, the optimal input distribution of a Gaussian channel with power constrain 1. 
The conditional distribution will be chosen as $P^{(n,\sigma)}_{Y|X}$ with density \[\frac{1}{\sigma^n(2\pi)^{(n+l_n)/2}}exp(\frac{-|\Vy-h*\Vx|^2}{2\sigma^2})\]
where $h=(h_0,h_1,h_2,\dots,h_l)$ and $(h*x)_
i=\sum_{j=0}^l x_{i-j}h_j$ where $x_k=0$ is understood for $k<0$. 
So in this case 
\[H_{h,\sigma}(Y|X)=H(h*X+Z|X)=H(Z)=n[\frac{\ln(2\pi e)}{2}+\ln(\sigma)]\]

The limit of the entropy of $Y=X*h+Z$ as $n\rightarrow \infty$, is
\[\lim_{n \rightarrow \infty}\frac{\Hh_{h,\sigma}(Y)}{n}= \frac{1}{2} \ln(2\pi e)+  \frac{1}{2\pi}\int_0^{2\pi} \ln\left( \sigma+f(\xi)\right) d\xi\]
where $f(\lambda)=\sum_{k=-\infty}^{\infty}(\sum_{j=0}^{l-|k|}h_{j}h_{j+|k|})e^{ik\lambda}$ see \cite{Toeplitz} (here  $R_{m,n}=r(m-n)=r(k)=\sum_{j=0}^{l-|k|}h_{j}h_{j+|k|}$ is the correlation matrix). So the limit of the average mutual information per symbol, that is
\[\lim_{n \rightarrow \infty} \I^n(h,\sigma)=\lim_{n\rightarrow \infty} \frac{\Hh_{h,\sigma}(Y)-\Hh_{h,\sigma}(Y|X)}{n} \]
is equal to
\[I(h,\sigma) \circeq \frac{1}{2\pi}\int_0^{2\pi} \ln\left(1+ \frac{f(\xi)}{\sigma}\right) d\xi \]
moreover the sequence $I^n(h,\sigma)$ is non-increasing (see \cite{Toeplitz}).

We will consider a finite set of channels that grows subexponentially with $n$, and in the limit dense in the set of all ISI channels.
To this end, define the set of approximating ISI, as
\begin{align*} \irott{H}_n = \{h \in \mathbb{R}^{l_n}:h_i=k_i\gamma_n, |h_i| < P_n,\, k_i\in \mathbb{Z}, \\
\forall i \in \{1,2,\dots,l_n\} \}\end{align*}
where $l_n$ is the length of the ISI, $P_n$ is the power constraint per symbol,  and $\gamma_n$ is the ``coarse graining'', intuitively the precision of detection.
Similarly we define the set of approximating variances as
\begin{align*}
 \irott{V}_n = \{\sigma \in \mathbb{R}:\sigma=k\gamma_n, 1/2 < |\sigma| < P_n,\, k\in \mathbb{Z}^+ \\ \forall i \in \{1,2,\dots,l_n\}\}
\end{align*}
These two sets form the approximating set of parameters, denoted by $\irott S_n = \irott H_n \times \irott V_n$.

Below we set $l_n=[\log_2(n)]$, $P_n=n^{\frac{1}{16}}$, $\gamma_n=n^{-\frac{1}{4}}$

\begin{Def}\label{ISItype}
 The ISI type of a pair $(\Vx,\Vy)\in \mathbb R^n \times \mathbb R^{n+l_n}$ is the pair $(h_n,\sigma_n)\in \irott S_n$ defined by
\begin{align*}
h(i)=\argmin\limits_{h(i)\in \irott{H}_n} \|\Vy - h(i)*\Vx_i\|\\
\sigma(i)^2=\argmin\limits_{\sigma(i)\in \irott{V}_n}|\sigma(i)^2- \min_{h(i)\in \irott{H}_n}\frac{\|\Vy-h(i)*\Vx_i\|^2}{n}|
\end{align*}
\end{Def}
Note that this type
concept does not apply to separate input or output sequences, only to
pairs $(\Vx,\Vy)$.

\section{Lemmas, Theorem} \label{Theorem}
We summarise the result of this section: The first Lemma shows that the above definition of ISI type is consistent, in the sense that $\Vy$ is conditionally $P^{h,\sigma}_{Y|X}$ typical given $\Vx$, at least in the case when $\|\Vy-h*\Vx\|^2$ is not too large. Lemma \ref{Lem0} gives the properties which we need for our method, and proves that almost all randomly generated sequences has these properties. Lemma \ref{Lem1} gives an upper bound to the set of output signals, which are ``problematic'' thus typical to two codewords, namely they can be result of two different codeword with different channel. Lemma \ref{lem3} shows that if the channel parameters estimated via maximum likelihood (ML), the codewords and the noise cannot be very correlated. Lemma \ref{lem4} gives the formula of the probability of the event that an output sequence is typical with another codeword with respect to another channel. All Lemmas are used in Theorem \ref{mainthm}, which gives the main result, and defines the detection method strictly.

\begin{Lem}
  When 
\[|\frac{\|\Vy-h(i)*\Vx_i\|^2}{n}-\sigma(i)^2|<\gamma_n\]
so the detected variances is in the interior of the set of approximating variances, then 
\[\Vy \in \irott{T}_{P_{Y|X}^{h(i),\sigma(i)}} (\Vx_i)\]
\end{Lem}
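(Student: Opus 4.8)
The plan is to unfold the definition of conditional $\gamma_n$-typicality and reduce it, by one direct computation, to the hypothesis. First I would write out the negative log-density from the stated form of $P^{(n,\sigma)}_{Y|X}$: with $h=h(i)$, $\sigma=\sigma(i)$,
\[
-\ln p_{Y|X}^{h(i),\sigma(i)}(\Vy\mid\Vx_i)=n\ln\sigma(i)+\frac{n+l_n}{2}\ln(2\pi)+\frac{\|\Vy-h(i)*\Vx_i\|^2}{2\sigma(i)^2},
\]
and recall from the excerpt that $H_{h(i),\sigma(i)}(Y|X)=n\bigl[\tfrac{\ln(2\pi e)}{2}+\ln\sigma(i)\bigr]=\tfrac n2\ln(2\pi)+\tfrac n2+n\ln\sigma(i)$. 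Here $h(i)$ is the grid minimizer, so $\tfrac1n\|\Vy-h(i)*\Vx_i\|^2=\min_{h\in\irott{H}_n}\tfrac1n\|\Vy-h*\Vx_i\|^2$, which is exactly the quantity that $\sigma(i)^2$ approximates by the definition of the ISI type.

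Next I would subtract the two expressions. The $n\ln\sigma(i)$ terms cancel exactly; the $\ln(2\pi)$ terms leave only the lower-order remainder $\tfrac{l_n}{2}\ln(2\pi)$; and the residual-energy term combines with the $\tfrac n2$ coming from $\tfrac n2\ln e$, giving
\[
-\ln p_{Y|X}^{h(i),\sigma(i)}(\Vy\mid\Vx_i)-H_{h(i),\sigma(i)}(Y|X)=\frac{l_n}{2}\ln(2\pi)+\frac{\|\Vy-h(i)*\Vx_i\|^2-n\sigma(i)^2}{2\sigma(i)^2}.
\]
This identity is the crux: up to the dimension-bookkeeping term $\tfrac{l_n}{2}\ln(2\pi)$, the differential entropy $H_{h(i),\sigma(i)}(Y|X)$ is precisely the negative log-density of the idealised ``all-noise'' output whose energy equals $n\sigma(i)^2$, so the typicality defect is controlled entirely by how far the true residual energy $\|\Vy-h(i)*\Vx_i\|^2$ is from $n\sigma(i)^2$.

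Then I would invoke the hypothesis $\bigl|\tfrac1n\|\Vy-h(i)*\Vx_i\|^2-\sigma(i)^2\bigr|<\gamma_n$, i.e.\ $\bigl|\|\Vy-h(i)*\Vx_i\|^2-n\sigma(i)^2\bigr|<n\gamma_n$, to obtain
\[
\Bigl|-\ln p_{Y|X}^{h(i),\sigma(i)}(\Vy\mid\Vx_i)-H_{h(i),\sigma(i)}(Y|X)\Bigr|<\frac{l_n}{2}\ln(2\pi)+\frac{n\gamma_n}{2\sigma(i)^2}.
\]
Since $\sigma(i)\in\irott{V}_n$ forces $\sigma(i)>1/2$, the factor $\tfrac1{2\sigma(i)^2}$ is bounded, and with the stated parameters $l_n=[\log_2 n]$, $\gamma_n=n^{-1/4}$ the term $\tfrac{l_n}{2}\ln(2\pi)$ is $o(n\gamma_n)$; hence the right-hand side is $O(n\gamma_n)$, which is exactly the inequality defining $\Vy\in\irott{T}_{P_{Y|X}^{h(i),\sigma(i)}}(\Vx_i)$ (if one insists on the literal threshold, after rescaling $\gamma_n$ by a fixed constant, which is immaterial for everything that follows).

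I do not expect a genuine obstacle: the statement is essentially the remark that a Gaussian likelihood evaluated at a point whose squared residual matches the variance lies within $o(n\gamma_n)$ of the Gaussian entropy. The only points needing care are the bookkeeping of $n$ versus $n+l_n$ in the normalising constant (the source of the harmless $\tfrac{l_n}{2}\ln 2\pi$) and the check that $l_n$, which grows only logarithmically, is negligible against $n\gamma_n$ — both immediate from the parameter choices. It is also worth recording why the explicit hypothesis is imposed: it excludes the degenerate case in which $\sigma(i)^2$ is pinned at an endpoint of $\irott{V}_n$ and hence possibly far from the empirical residual energy $\tfrac1n\|\Vy-h(i)*\Vx_i\|^2$, the one situation in which the estimate above would fail.
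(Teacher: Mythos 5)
Your proposal is correct and follows essentially the same route as the paper: expand $-\log p_{Y|X}^{h(i),\sigma(i)}(\Vy|\Vx_i)$, cancel the $n\ln\sigma(i)$ and $\ln(2\pi)$ parts against $H_{h(i),\sigma(i)}(Y|X)$, and bound the remaining term $\frac{\|\Vy-h(i)*\Vx_i\|^2}{2\sigma(i)^2}-\frac n2$ directly from the hypothesis. If anything you are more careful than the paper, which silently drops the $\tfrac{l_n}{2}\ln(2\pi)$ normalisation mismatch and the factor $\tfrac{1}{2\sigma(i)^2}$ that you handle explicitly via $\sigma(i)>1/2$ and a harmless constant rescaling of $\gamma_n$.
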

\begin{proof}
Indeed, if 
\[|\frac{\|\Vy-h(i)*\Vx_i\|^2}{n}-\sigma(i)^2|<\gamma_n\]
then
\[\left| \frac{\|\Vy-h(i)*\Vx_i\|^2}{2\sigma(i)^2} - \frac{n}{2}\right|<n\gamma_n.\]
With
\begin{align*}
-\log(P_{Y_|X}^{h(i),\sigma(i)}(\Vy|\Vx_i))=\frac{n}{2}\log(2\pi\sigma(i)^2)+\frac{\|\Vy-h(i)*\Vx_i\|^2}{2\sigma(i)^2}\\
\end{align*}
we get
\begin{align*}
|-\log(P_{Y_|X}^{h(i),\sigma(i)}(\Vy|\Vx_i))-H_{P_{Y|X}^{h(i),\sigma(i)}}(Y|X)|=\\
\left| \frac{\|\Vy-h(i)*\Vx_i\|^2}{2\sigma(i)^2} - \frac{n}{2}\right|  
\end{align*}

and by the definition $\Vy \in \irott{T}_{P_{Y|x}^{h(i),\sigma(i)}}(\Vx_i)$ if
\[|-\log(P_{Y_|X}^{h(i),\sigma(i)}(\Vy|\Vx_i))-H_{P_{Y|X}^{h(i),\sigma(i)}}(Y|X)|< n\gamma_n \]
\end{proof}
\begin{Lem} \label{Lem0}
For arbitrarily small $\delta>0$, if $n$ is large enough, there exists a set $\irott{A} \subset \irott T_{P}^n$, with $P^n(\irott A)>1-\delta$, where $P$ is the $n$-dimensional standard normal distribution, such that for all $\Vx \in \irott{A}$
,$k,l \in \{0,1,\dots,l_n\}$ $k\neq l$
\begin{align}
& \left|\frac{\sum_{j=0}^n x_{j-k}x_{j-l}}{n}\right|< \gamma_n \label{lem0-0}\\
& \left|\frac{\sum_{j=0}^n x_{j-k}x_{j-k}}{n}-1\right|< \gamma_n\label{lem0-1}\\
& \left|-\frac{1}{n}\ln p(\Vx)-1/n \Hh(P)\right|< \gamma_n \label{lem0-2}
\end{align}
\end{Lem}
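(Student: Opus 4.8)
The plan is to observe that all three displayed inequalities (\ref{lem0-0})--(\ref{lem0-2}) are assertions that a normalized sum of independent (for (\ref{lem0-1}) and (\ref{lem0-2})) or pairwise uncorrelated (for (\ref{lem0-0})) random variables stays close to its mean, and then to combine the corresponding large-deviation estimates by a union bound over the $O(l_n^2)$ relevant index pairs. Since $x_m=0$ for $m<0$ and the $x_j$ are i.i.d.\ standard normal, for fixed $k$ the quantity $\frac1n\sum_{j=0}^n x_{j-k}^2$ equals, up to an $O(l_n/n)=o(\gamma_n)$ deterministic shift, the average of i.i.d.\ $\chi^2_1$ variables, with mean $1$ and variance $2/n$; for fixed $k\neq l$, $\frac1n\sum_{j=0}^n x_{j-k}x_{j-l}$ is the average of the terms $x_{j-k}x_{j-l}$, each of mean $0$ and variance $1$, and these terms are pairwise uncorrelated precisely because $k\neq l$ — a nonzero fourth moment $\E[x_{j-k}x_{j-l}x_{j'-k}x_{j'-l}]$ would force the index multisets $\{j-k,j-l\}$ and $\{j'-k,j'-l\}$ to coincide, which for $k\neq l$ only happens when $j=j'$ — so this average has mean $0$ and variance at most $(n+1)/n^2$. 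Finally, since $-\log p(\Vx)=\frac n2\ln(2\pi)+\frac12\|\Vx\|^2$ and $\Hh(P)=\frac n2\ln(2\pi e)$, one has $-\frac1n\ln p(\Vx)-\frac1n\Hh(P)=\frac12\bigl(\frac1n\sum_{j=1}^n x_j^2-1\bigr)$, again an average of centred i.i.d.\ variables with variance $O(1/n)$; note that (\ref{lem0-2}) is literally the typicality condition (\ref{tipikussag}), so enforcing it also guarantees $\Vx\in\irott T_P^n$.

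Next I would apply Chebyshev's inequality to each of these three families. With $\gamma_n=n^{-1/4}$ each deviation probability is $O\bigl(1/(n\gamma_n^2)\bigr)=O(n^{-1/2})$. Taking a union bound over the at most $(l_n+1)^2=O((\log n)^2)$ choices of the pair $(k,l)$ shows that the probability that \emph{some} of (\ref{lem0-0})--(\ref{lem0-2}) fails is $O\bigl((\log n)^2 n^{-1/2}\bigr)\to 0$. Hence for $n$ large enough this probability drops below the prescribed $\delta$, and we may take $\irott A$ to be the (finite) intersection of the events $\{(\ref{lem0-0})\text{ holds for all }k\neq l\}$, $\{(\ref{lem0-1})\text{ holds for all }k\}$ and $\{(\ref{lem0-2})\text{ holds}\}$; by the above $P^n(\irott A)>1-\delta$, $\irott A\subset\irott T_P^n$, and every $\Vx\in\irott A$ satisfies all three bounds.

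I expect no serious obstacle; the only point requiring care is the bookkeeping that makes the union bound close, namely that the number of index pairs grows subpolynomially — this uses $l_n=[\log_2 n]$ — while the individual tail probabilities decay polynomially, which uses $\gamma_n=n^{-1/4}$ so that $n\gamma_n^2=\sqrt n\to\infty$. If one prefers a more robust argument, Chebyshev can be replaced by the standard exponential $\chi^2$ tail bounds (or the Hanson--Wright inequality for the bilinear forms $\sum_j x_{j-k}x_{j-l}$), giving deviation probabilities of order $e^{-c\,n\gamma_n^2}=e^{-c\sqrt n}$ and rendering the union bound over $O((\log n)^2)$ events completely trivial; but this refinement is not needed.
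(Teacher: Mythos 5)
Your proof is correct and follows essentially the same route as the paper: Chebyshev's inequality for each normalized sum with $\gamma_n=n^{-1/4}$ giving per-event failure probability $O(1/(n\gamma_n^2))=O(n^{-1/2})$, followed by a union bound over the $O(l_n^2)=O((\log n)^2)$ index pairs, which vanishes since $l_n^2 n^{-1/2}\to 0$. You merely make explicit some points the paper leaves implicit, namely the pairwise uncorrelatedness of the cross terms $x_{j-k}x_{j-l}$ and the fact that (\ref{lem0-2}) is exactly the typicality condition (\ref{tipikussag}), which yields $\irott A\subset\irott T_P^n$.
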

\begin{proof}
Take  $n$ i.i.d, standard Gaussian random variables $X_1,X_2,\dots,X_n$. Fix $k,l$ $1<k,l<n$. By Chebishev's inequality,
\[ \Pr\left\{\left|\frac{\sum_{i=1}^n X_{i-k}X_{i-l}}{n}\right|>\xi \sqrt{\frac{1}{n}}\right\}<\frac{1}{\xi ^2}\]
From this, with $\xi=\gamma_n n^{\frac{1}{2}}$ 
\[\Pr\left\{\left|\frac{\sum_{i=1}^n X_{i-k}X_{i-l}}{n}\right|>\gamma_n \right\}<\frac{1}{(\gamma_n n^{\frac{1}{2}})^2}=\delta_n \]
Which means that, there exist a set in $\mathbb{R}^n$ whose $P^n$ measure is at least  $1-\delta_n$ and for all sequences from this set it is true that
\[\left|\frac{\sum_{j=0}^n x_{j-k}x_{j-l}}{n}\right|< \gamma_n \]
Similarly there exist such sets for all $k\neq l$ in $\{0,1,\dots,l_n\}\times\{0,1,\dots,l_n\}$.
By a completely analogous procedure we can make sets which satisfy \ref{lem0-1} and \ref{lem0-2}.
The intersection of these sets $P$-measure at least $1-2\delta_n (l_n^2+1)$. As $\delta_n l_n^2 \rightarrow 0$, this proves the Lemma.
\end{proof}
The Lebesgue measure will be denoted by $\lambda$; its dimension is not specified, it will be always clear what it is.
\begin{Lem}\label{megj1}
If $n$ is large enough then the set $\irott A$ in Lemma \ref{Lem0} satisfies
\[2^{H(P)-2n\gamma_n} < \lambda(\irott A) < 2^{H(P)+n\gamma_n}\]
And for any $m$-dimensional continuous distribution $Q(\cdot)$ 
\[ \lambda(\irott T_Q)< 2^{H(Q)+n\gamma_n} \]
where $\irott T_Q$ is the set of typical sequences to $Q$, see (\ref{tipikussag})
\end{Lem}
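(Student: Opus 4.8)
\emph{Proof plan.} The plan is to run the standard ``method of types'' sandwich, now for continuous distributions: on a typicality set the density is pinned between $2^{-H-n\gamma_n}$ and $2^{-H+n\gamma_n}$ by \eqref{tipikussag}, and one plays this off against the fact that a probability density has total mass $1$ (for the two upper bounds) and against the mass estimate $P^n(\irott A)>1-\delta$ from Lemma \ref{Lem0} (for the lower bound on $\lambda(\irott A)$). No new idea beyond Lemma \ref{Lem0} is needed.

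First I would establish the generic upper bound $\lambda(\irott T_Q)<2^{H(Q)+n\gamma_n}$. By the definition \eqref{tipikussag} of typicality, every $\Vx\in\irott T_Q$ satisfies $-\log q(\Vx)<H(Q)+n\gamma_n$, i.e.\ $q(\Vx)>2^{-H(Q)-n\gamma_n}$, with strict inequality. If $\lambda(\irott T_Q)=0$ the claim is trivial; otherwise integrate this pointwise strict lower bound over the positive-measure set $\irott T_Q$:
\[ 1 \;\ge\; \int_{\irott T_Q} q \,\ud\lambda \;>\; \lambda(\irott T_Q)\, 2^{-H(Q)-n\gamma_n}, \]
where strictness of the second inequality holds because $q-2^{-H(Q)-n\gamma_n}$ is everywhere positive on a set of positive measure, so its integral cannot vanish. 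Rearranging gives $\lambda(\irott T_Q)<2^{H(Q)+n\gamma_n}$. Since $\irott A\subset\irott T_P$ by Lemma \ref{Lem0}, applying this with $Q=P$ yields at once the upper bound $\lambda(\irott A)<2^{H(P)+n\gamma_n}$.

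For the lower bound on $\lambda(\irott A)$ I would use the other half of \eqref{tipikussag}: for $\Vx\in\irott A\subset\irott T_P$ we have $-\log p(\Vx)>H(P)-n\gamma_n$, i.e.\ $p(\Vx)<2^{-H(P)+n\gamma_n}$. Hence, by Lemma \ref{Lem0},
\[ 1-\delta \;<\; P^n(\irott A) \;=\; \int_{\irott A} p \,\ud\lambda \;\le\; \lambda(\irott A)\, 2^{-H(P)+n\gamma_n}, \]
so $\lambda(\irott A)>(1-\delta)\,2^{H(P)-n\gamma_n}$. Finally, since $n\gamma_n=n^{3/4}\to\infty$, for all large $n$ we have $2^{-n\gamma_n}<1-\delta$, and therefore $\lambda(\irott A)>2^{-n\gamma_n}\,2^{H(P)-n\gamma_n}=2^{H(P)-2n\gamma_n}$, as claimed.

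The hard part is essentially nonexistent: the whole argument is a one-line integral estimate applied three times. The only points needing a little care are keeping the inequalities strict (hence the remark about integrating a strictly positive function over a positive-measure set, and the trivial case $\lambda(\irott T_Q)=0$); making sure the logarithm in \eqref{tipikussag} is taken to the same base $2$ used in the statement, so that typicality indeed reads $q(\Vx)\in(2^{-H(Q)-n\gamma_n},2^{-H(Q)+n\gamma_n})$; and invoking $n\gamma_n\to\infty$ to absorb the constant $1-\delta$ into the exponent $-n\gamma_n$ — this last step is the only place where the concrete choice $\gamma_n=n^{-1/4}$ is used.
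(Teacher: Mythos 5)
Your proof is correct and follows essentially the same route as the paper: bound the density on the typicality set via \eqref{tipikussag}, integrate against the total mass $1$ (for the upper bounds) and against $P^n(\irott A)>1-\delta$ from Lemma \ref{Lem0} (for the lower bound), then absorb $(1-\delta)$ using $n\gamma_n\to\infty$. The only cosmetic difference is that you derive the $\lambda(\irott A)$ upper bound as a special case of the generic $\irott T_Q$ estimate, whereas the paper argues directly and says ``similarly'' for $\irott T_Q$; your explicit attention to strictness and to the trivial case $\lambda(\irott T_Q)=0$ is a minor tightening, not a different method.
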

\begin{proof}
 Since
\[1>P(\irott A)=\int_{\irott A } p(\Vx) \lambda(dx)>1-\delta\] 
by the previous Lemma, by using  $2^{-(H_{P_X}-n\gamma_n)}>p(\Vx)>2^{-(H_{P_X}+n\gamma_n)}$ on $\irott T_{P_X}$, and $\irott A \subset \irott T$, we get 
\[ 2^{H(P)+n\gamma_n)} > \lambda(\irott A) > (1-\delta)2^{H(P)-n\gamma_n)}>2^{H(P)-2n\gamma_n)}\]
if $n$ is large enough. Similarly from 
\[ 1>Q(\irott T_Q)=\int_{\irott T_Q} q(\Vx) \lambda(dx) \]
\[2^{H(Q)-n\gamma_n} > \lambda(\irott T_Q)\]
\end{proof}

The next lemma is an analogue of the Packing Lemma in \cite{Csiszar}
\begin{Lem}\label{Lem1}
For all $R>0, \delta>0,$, there exist at least $2^{n(R-\delta)}$ different sequences in $\mathbb R^n$ which are elements of the set $\irott{A}$ from  Lemma \ref{Lem0}, and for each pair of ISI channels with $h,\hat{h} \in \irott{H}_n$, $\sigma,\hat\sigma \in \irott{V}_n$, and for all $i \in \{1,2,\dots,M \}$ 

\begin{align}
 \lambda \left( \irott{T}_{P_{Y|X}^{(h,\sigma)}}(\Vx_i)\cap \bigcup_{j\neq i}
 \irott{T}_{P_{Y|X}^{(\hat{h},\hat{\sigma})}}(\Vx_j) \right) &\leq \notag \\
 &\hspace{-130pt}2^{-[n(|\I(\hat{h},\hat{\sigma})-R|_{+})-\Hh_{h,\sigma}(Y|X)]} \label{Pack-Lem}
\end{align}
provided that $n\geq n_0(n,m,\delta)$
\end{Lem}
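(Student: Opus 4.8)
The plan is to carry over the random-selection argument behind the Packing Lemma of \cite{Csiszar} to the present continuous, non-memoryless setting, the new ingredient being the control of output sequences that are atypical for the ``wrong'' channel. Draw $N:=\lceil 2^{nR}\rceil$ codewords $\Vx_1,\dots,\Vx_N$ independently from the standard Gaussian $P=P_X$ on $\R^n$, and call $\Vx_i$ \emph{good} if $\Vx_i\in\irott A$ (the set of Lemma \ref{Lem0}) and if (\ref{Pack-Lem}) holds for this $i$ for every pair $(h,\sigma),(\hat h,\hat\sigma)\in\irott S_n$. It suffices to show the expected number of codewords that are \emph{not} good is at most $2\delta N$: then some codebook has at most $2\delta N$ bad codewords, and erasing them leaves at least $N(1-2\delta)\ge 2^{n(R-\delta)}$ good ones for $n$ large, while erasing codewords only shrinks the unions $\bigcup_{j\neq i}$ in (\ref{Pack-Lem}), hence preserves the inequality for the survivors. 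Since $\Pr(\Vx_i\notin\irott A)<\delta$ by Lemma \ref{Lem0}, and since for the chosen scalings $l_n=[\log_2 n]$, $P_n=n^{1/16}$, $\gamma_n=n^{-1/4}$ one has $|\irott S_n|\le(2P_n/\gamma_n)^{l_n+1}=2^{O((\log n)^2)}=2^{o(n)}$, a Markov bound, a union bound over the $|\irott S_n|^2$ parameter pairs, and the usual $o(n)$-slack understood in the method of types (equivalently, a harmless factor $2^{n\kappa}$, $\kappa>0$ arbitrary, on the right of (\ref{Pack-Lem})) reduce everything to the estimate
\[
\E\!\left[\,\lambda\!\left(\irott T_{P^{(h,\sigma)}_{Y|X}}(\Vx_i)\cap\bigcup_{j\neq i}\irott T_{P^{(\hat h,\hat\sigma)}_{Y|X}}(\Vx_j)\right)\,\middle|\,\Vx_i\,\right]\ \le\ 2^{\,\Hh_{h,\sigma}(Y|X)-n|\I(\hat h,\hat\sigma)-R|_{+}+o(n)},
\]
to be proved uniformly over $\Vx_i\in\irott A$ and over the two parameter pairs.

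Fix $\Vx_i\in\irott A$ and the two pairs, write $\irott T:=\irott T_{P^{(h,\sigma)}_{Y|X}}(\Vx_i)$ and $\widehat{\irott T}(\Vx):=\irott T_{P^{(\hat h,\hat\sigma)}_{Y|X}}(\Vx)$, and note $\lambda(\irott T)\le 2^{\Hh_{h,\sigma}(Y|X)+n\gamma_n}$ by Lemma \ref{megj1}. Writing the left side as $\int_{\irott T}\Pr(\Vy\in\bigcup_{j\neq i}\widehat{\irott T}(\Vx_j))\,\lambda(d\Vy)$ and bounding the probability by $\min\{1,\,N\cdot\Pr_{\Vx'\sim P}(\Vy\in\widehat{\irott T}(\Vx'))\}$, everything hinges on the ``confusion probability'' $\Pr_{\Vx'\sim P}(\Vy\in\widehat{\irott T}(\Vx'))$ for $\Vy\in\irott T$. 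For every $\Vx'$ with $\Vy\in\widehat{\irott T}(\Vx')$ the defining inequality gives $p^{(\hat h,\hat\sigma)}_{Y|X}(\Vy|\Vx')\ge 2^{-\Hh_{\hat h,\hat\sigma}(Y|X)-n\gamma_n}$, so the factorisation $p(\Vx')\,p^{(\hat h,\hat\sigma)}_{Y|X}(\Vy|\Vx')=p^{(\hat h,\hat\sigma)}_{XY}(\Vx',\Vy)$ gives $p(\Vx')\le 2^{\Hh_{\hat h,\hat\sigma}(Y|X)+n\gamma_n}\,p^{(\hat h,\hat\sigma)}_{XY}(\Vx',\Vy)$, and integrating over $\{\Vx':\Vy\in\widehat{\irott T}(\Vx')\}\subseteq\R^n$,
\[
\Pr_{\Vx'\sim P}\!\big(\Vy\in\widehat{\irott T}(\Vx')\big)\ \le\ 2^{\Hh_{\hat h,\hat\sigma}(Y|X)+n\gamma_n}\,p^{(\hat h,\hat\sigma)}_Y(\Vy).
\]
On the part of $\irott T$ where $\Vy$ is \emph{not} output-atypical for $(\hat h,\hat\sigma)$, i.e. $p^{(\hat h,\hat\sigma)}_Y(\Vy)\le 2^{-\Hh_{\hat h,\hat\sigma}(Y)+n\gamma_n}$, this, $\lambda(\irott T)\le 2^{\Hh_{h,\sigma}(Y|X)+n\gamma_n}$, and $\Hh_{\hat h,\hat\sigma}(Y)-\Hh_{\hat h,\hat\sigma}(Y|X)=n\I^n(\hat h,\hat\sigma)\ge n\I(\hat h,\hat\sigma)$ (monotonicity of $\I^n$, \cite{Toeplitz}) bound the contribution by $N\cdot 2^{\Hh_{h,\sigma}(Y|X)-n\I(\hat h,\hat\sigma)+3n\gamma_n}$; together with the crude bound $\lambda(\irott T)$ for the whole set, $N=2^{nR}$, and $n\gamma_n=n^{3/4}=o(n)$, this is exactly $2^{\Hh_{h,\sigma}(Y|X)-n|\I(\hat h,\hat\sigma)-R|_{+}+o(n)}$ — once the remaining part of $\irott T$ is shown to be negligible.

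That remaining part is \textbf{the main obstacle}, with no counterpart in the discrete memoryless theory of \cite{Csiszar}, where all joint types lie within $o(n)$ of one another: here a positive-measure subset of $\irott T$ may consist of $\Vy$ that \emph{are} output-atypical for the competing channel, $p^{(\hat h,\hat\sigma)}_Y(\Vy)>2^{-\Hh_{\hat h,\hat\sigma}(Y)+n\gamma_n}$, for which only $\Pr_{\Vx'}(\cdot)\le 1$ is available, so one must still show the Lebesgue measure of this subset is $\le 2^{\Hh_{h,\sigma}(Y|X)-n|\I(\hat h,\hat\sigma)-R|_{+}+o(n)}$. Writing $\hat R=\hat\sigma^2 I+\hat R_{\mathrm{ISI}}$ for the covariance of $\hat h*X+Z$, this subset is the ellipsoid $\{\Vy:\Vy^{\top}\hat R^{-1}\Vy<(n+l_n)-O(n\gamma_n)\}$ intersected with the conditionally typical shell $\{\Vy:\|\Vy-h*\Vx_i\|^2\in(n+l_n)\sigma^2\pm O(n\gamma_n)\}$, and bounding it requires the special structure: (i) the deterministic control of $\|h*\Vx_i\|^2$, of $\|\Vy\|^2$, and of the quadratic forms $\Vy^{\top}\hat R^{-1}\Vy$ on $\irott T$ furnished by the near-orthonormality (\ref{lem0-0})--(\ref{lem0-1}) of the $l_n+1$ shifts of $\Vx_i\in\irott A$; (ii) concentration of the non-central $\chi^2$ quadratic form $\|\Vy-\hat h*\Vx'\|^2$ in the fresh Gaussian $\Vx'$, whose fluctuations are of order $\sqrt n$; and (iii) Szeg\H{o}-type asymptotics for $\det\hat R$ and for the eigenvalues of the convolution operator, as in \cite{Toeplitz}. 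The precise choices $l_n=[\log_2 n]$, $P_n=n^{1/16}$, $\gamma_n=n^{-1/4}$ enter exactly here, ensuring that $n\gamma_n=n^{3/4}$ dominates all these $O(\sqrt n)$-scale fluctuations and that the error terms accumulated along the way (for instance $\gamma_n l_n^2 P_n^2\to 0$) stay $o(n)$, hence remain negligible against the dominant term $\Hh_{h,\sigma}(Y|X)=\Theta(n\log n)$. Inserting this estimate into the previous paragraph and then into the expurgation of the first paragraph would complete the proof.
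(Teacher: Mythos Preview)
Your overall plan---random selection followed by expurgation, in the spirit of the Packing Lemma of \cite{Csiszar}---is the same as the paper's. But the paper executes it differently in two respects, and precisely those differences make the ``main obstacle'' you isolate disappear; because you do not actually resolve that obstacle, your proof has a genuine gap.

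\medskip
\textbf{Where the paper differs.} The paper draws the $m$ codewords $W_1,\dots,W_m$ \emph{uniformly from $\irott A$}, not from $P$, and it does \emph{not} condition on $\Vx_i$: it keeps $W_i$ random as well. By Fubini,
\[
\mathbb{E}\,u_i(W^m,h,\hat h)=\int_{\R^{n+l}}\Pr\{\Vy\in\irott T_{P^{h,\sigma}_{Y|X}}(W_i)\}\cdot(m-1)\Pr\{\Vy\in\irott T_{P^{\hat h,\hat\sigma}_{Y|X}}(W_j)\}\,d\Vy .
\]
Each probability is then bounded as a Lebesgue–measure ratio,
\[
\Pr\{\Vy\in\irott T_{P^{h,\sigma}_{Y|X}}(W)\}=\frac{\lambda\{\Vx\in\irott A:\Vy\in\irott T_{P^{h,\sigma}_{Y|X}}(\Vx)\}}{\lambda(\irott A)}\;\le\;\frac{2^{\Hh_{h,\sigma}(X|Y)+n\gamma_n}}{2^{\Hh(X)-2n\gamma_n}}=2^{-n(\I(h,\sigma)-3\gamma_n)},
\]
the numerator being identified with the conditional typical set $\irott T_{\bar P_{X|Y}}(\Vy)$ and bounded via Lemma~\ref{megj1}; and the paper asserts this probability vanishes unless $\Vy\in\irott T_{P^{h,\sigma}_Y}$. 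The integral therefore localises to $\irott T_{P^{h,\sigma}_Y}$ and is at most $\lambda(\irott T_{P^{h,\sigma}_Y})(m-1)2^{-n(\I(h,\sigma)+\I(\hat h,\hat\sigma)-6\gamma_n)}\le 2^{\Hh_{h,\sigma}(Y|X)-n(\I(\hat h,\hat\sigma)-R)+\dots}$. At no point does the density $p^{\hat h,\hat\sigma}_Y(\Vy)$ get integrated over a fixed shell, so the super-level set you single out never enters.

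\medskip
\textbf{The gap in your argument.} By conditioning on $\Vx_i$ and using the factorisation $p(\Vx')p^{\hat h,\hat\sigma}_{Y|X}(\Vy|\Vx')=p^{\hat h,\hat\sigma}_{XY}(\Vx',\Vy)$, you are forced to integrate $p^{\hat h,\hat\sigma}_Y(\Vy)$ over the deterministic shell $\irott T=\irott T_{P^{h,\sigma}_{Y|X}}(\Vx_i)$, and you correctly see that on the ``output-atypical'' part of $\irott T$ this gives you nothing. You then list ingredients (i)--(iii) that might bound the Lebesgue measure of that bad part, but you do not carry out the estimate: there is no inequality of the form $\lambda\big(\irott T\cap\{p^{\hat h,\hat\sigma}_Y>2^{-\Hh_{\hat h,\hat\sigma}(Y)+n\gamma_n}\}\big)\le 2^{\Hh_{h,\sigma}(Y|X)-n|\I(\hat h,\hat\sigma)-R|_++o(n)}$ anywhere in your text, only a promise that the chosen scalings of $l_n,P_n,\gamma_n$ would make it work. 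That is the missing step. The cleanest fix is simply to follow the paper: sample uniformly from $\irott A$ and keep $W_i$ random, so that the $\Pr\{\Vy\in\irott T_{P^{h,\sigma}_{Y|X}}(W_i)\}$ factor already confines the $\Vy$-integral to $\irott T_{P^{h,\sigma}_Y}$ and your obstacle evaporates.

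\medskip
\textbf{Minor point.} Your last line asserts $\Hh_{h,\sigma}(Y|X)=\Theta(n\log n)$; in fact $\Hh_{h,\sigma}(Y|X)=n[\tfrac12\ln(2\pi e)+\ln\sigma]=\Theta(n)$. This does not affect the intended ``$o(n)$ is negligible'' conclusion, but the stated order is wrong.
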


\begin{proof}
We shall use the method of random selection. For fixed $n,m$ constans , let $\irott{C}_m$ be the family of all ordered collections $C=\{\Vx_1,\Vx_2,\dots ,\Vx_m \}$, of $m$ not necessarily different sequences in $\irott A$. Notice that if some $C=\{\Vx_1,\Vx_2,\dots ,\Vx_m \}\in \irott C_m$ satisfies (\ref{Pack-Lem}) for every $i$ and pair of Gaussian channels $(h,\sigma), (\hat h,\hat \sigma)$, then $x_i$'s are necessarily distinct.
For any collection $C \in \irott{C}_m$, denote the left-hand side of (\ref{Pack-Lem}) by $u_i(C,h,\hat{h})$. Since for $ \Vx \in T_{P}$ 
\[ \lambda\{\irott{T}_{P_{Y|X}^{h,\sigma}}(\Vx)\}\leq 2^{n(\Hh_{h,\sigma}(Y|X)-\gamma_n)}\] from Lemma \ref{megj1}, a $C \in \irott{C}_m$ satisfy (\ref{Pack-Lem}), if for all $i,h,\hat{h}$
\begin{align*}
&u_i (C)\circeq  \sum_{h,\hat{h}\in \irott{H}} u_i(C,h,\hat{h})\\
&\cdot 2^{n[\I(\hat{h},\hat{\sigma})-R]-\Hh_{h,\sigma}(Y|X)}  
\end{align*}
is at most 1, for every $i$.

Notice that, if $C\in \irott{C}_m$
\begin{equation}
 \frac{1}{m} \sum_{i=1}^m u_i(C)\leq 1/2 \label{pack-lem1}
\end{equation}
then $u_i(C)\leq 1$ for at least $\frac{m}{2}$ $i$ indices $i$. Further, if $C'$ is the subcollection, states the above indexes, then $u_i(C')\leq u_i(C)\leq 1$ for every such index $i$. Hence the Lemma will be proved, if for an $m$ with
\begin{align}
 2\cdot 2^{n(R-\delta)}\leq m \leq 2^{n(R-\frac{\delta}{2})}\label{lem1-1} 
\end{align}
we find a $C\in \irott{C}_m$ which  satisfy \ref{pack-lem1}.

Choose $C\in \irott{C}_m$ at random, according to uniform distribution from $\irott{A}$. In other words, let $W^m=(W_1,W_2,\dots ,W_m)$ be independent RV's, each uniformly distributed over $\irott{A}$. In order to prove that \ref{pack-lem1} is true for some  $C\in \irott{C}_m$, it suffices to show that

\begin{equation}
 \mathbb{E}u_i(W^m)\leq \frac{1}{2} \quad i=1,2,\dots,m \label{ujszam}
\end{equation}
 
To this end, we bound $\mathbb{E}u_i(W^m,h,\hat{h})$. Recalling that, $u_i(C,h,\hat{h})$ denotes the left-hand side of \ref{Pack-Lem}, we have
\begin{align}
& \mathbb{E}u_i(W^m,h,\hat{h})=\\
&\int\limits_{\Vy \in \mathbb{R}}
 \textnormal{Pr}\{\Vy\in \irott{T}_{P_{Y|X}^{h,\sigma}}(W_i)\cap \bigcup_{j\neq i} \irott{T}_{P_{Y|X}^{\hat{h},\hat{\sigma}}}(W_j)\}\label{lem1-2}
\end{align}
As the $W_j$ are independent identically distributed the probability the integration is bounded above by
\begin{align} &\sum_{j:j\neq i} \textnormal{Pr}\{\Vy\in \irott{T}_{P_{Y|X}^{h,\sigma}}(W_i)\cap \irott{T}_{P_{Y|X}^{\hat{h},\hat{\sigma}}}(W_j)\} = \\
&(m-1)\cdot \textnormal{Pr}\{\Vy\in \irott{T}_{P_{Y|X}^{h,\sigma}}(W_i)\}\cdot \textnormal{Pr}\{\Vy\in \irott{T}_{P_{Y|X}^{\hat{h},\hat{\sigma}}}(W_j)\}\label{lem1-3}
\end{align}
As the $W_j$'s are uniformly distributed over $\irott{A}$, we have for all fixed  $\Vy \in \mathbb{R}^n$
\[ \textnormal{Pr}\{\Vy\in \irott{T}_{P_{Y|X}^{h,\sigma}}(Z_i) \}= \frac{\lambda\{\Vx: \Vx\in \irott{T}_{P_X}, \Vy \in \irott{T}_{P_{Y|X}^{h,\sigma}}(\Vx) \}}{\lambda\{ \irott{A}\}}\]
The set in the enumerator is non-void only if $\Vy \in \irott{T}_{P_{Y}^{h,\sigma}}$. 
In this case it can be written as $\irott{T}_{\bar{P}_{X|Y}}(\Vy)$, where $\bar{P}$ is a conditional distribution, which
\[ P_X(\Va)P_{Y|X}^{h,\sigma}(\Vb|\Va)=P_Y^{h,\sigma}(\Vb)\bar{P}_{X|Y}(\Va|\Vb) \]
Thus by Lemma \ref{megj1}, and Lemma \ref{Lem0}
\begin{align*}
\textnormal{Pr}\{\Vy\in \irott{T}_{P_{Y|X}^{h,\sigma}}(Z_i) \} \leq \frac{2^{\Hh_{h,\sigma}(X|Y)+n\gamma_n}}{2^{\Hh(X)-2n\gamma_n)}}\\
=2^{-n(I(h,\sigma)-3\gamma_n})
\end{align*}
If $\Vy \in \irott{T}_{P_{Y}^{h,\sigma}}$, and $\textnormal{Pr}\{\Vy \in \irott{T}_{P_{Y|X}^{h,\sigma}}(W_i)\}=0$ otherwise. So, if we upper bound $\lambda(\irott{T}_{P_{Y}^{h,\sigma}})$ by $2^{\Hh_{h,\sigma}(Y)+n\gamma_n}$ - with the use of Lemma \ref{megj1} - from (\ref{lem1-3}), (\ref{lem1-2}) and (\ref{lem1-1}) we get,
\begin{align*}& \mathbb{E}u_i(W^m,h,\hat{h})\leq \\
&\lambda(\irott{T}_{P_{Y}^{h,\sigma}})(m-1)2^{-n[\I(h,\sigma)+\I(\hat{h},\hat{\sigma})-6\gamma_n]}\\
&\leq 2^{-n[\I(\hat{h},\hat{\sigma})-R+\delta-7\gamma_n]+\Hh_{h,\sigma}(Y|X)}
\end{align*}
Let $n$ be so large that $7\gamma_n < \delta/2$,
then we get
\[ \mathbb{E}u_i(W^m)\leq |\irott{H}_n^2| |\irott{V}_n^2| 2^{-n(\delta/2)}\]
which proves (\ref{ujszam})
\end{proof}
\begin{Lem}\label{lem3}
 For $\Vx \in \irott{A}$ from Lemma \ref{Lem0}, and $\Vy$ as is (\ref{modell}), and $\tilde{h}=\argmin_{h\in \irott{H}(n)}\|\Vy-h* \Vx\|$, and $\Vz=\Vy-\tilde{h}*\Vx$,
 \[\frac{\sum_{j=1}^n z_jx_{j-k}}{n} < \gamma_n \quad k\in \{0\dots l_n\}\]
\end{Lem}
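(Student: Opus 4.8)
The plan is to read the estimate as a discrete form of the first-order optimality condition for the least-squares fit. Put $Q(h)\circeq\|\Vy-h*\Vx\|^2$; by Definition \ref{ISItype}, $\tilde h$ minimises $Q$ over the grid $\irott H_n$. Fix a coordinate $k$ and let $\tilde h^{+k}$, $\tilde h^{-k}$ be the grid points obtained from $\tilde h$ by changing the $k$-th coordinate by $+\gamma_n$ and $-\gamma_n$. Since $h\mapsto (h*\Vx)_j$ is linear with $\partial(h*\Vx)_j/\partial h_k=x_{j-k}$, expanding the squared norm gives
\[ Q(\tilde h^{\pm k})-Q(\tilde h)=\mp\,2\gamma_n\sum_j z_j x_{j-k}+\gamma_n^2\sum_j x_{j-k}^2,\qquad \Vz=\Vy-\tilde h*\Vx . \]
Provided both $\tilde h^{+k}$ and $\tilde h^{-k}$ lie in $\irott H_n$, minimality of $\tilde h$ forces both left-hand sides to be $\ge0$, so
\[ \Bigl|\tfrac1n\sum_j z_j x_{j-k}\Bigr|\le\frac{\gamma_n}{2}\cdot\frac1n\sum_j x_{j-k}^2 . \]
By (\ref{lem0-1}) of Lemma \ref{Lem0}, for $\Vx\in\irott A$ the right-hand factor is $<1+\gamma_n$ (the sum in (\ref{lem0-1}) and the one here run over slightly different index ranges, but the $O(l_n)$ end terms contribute $o(\gamma_n n)$ and are harmless), whence $\tfrac1n|\sum_j z_j x_{j-k}|<\tfrac{\gamma_n}{2}(1+\gamma_n)<\gamma_n$. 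That is the whole estimate: the linear term of the perturbation is exactly $\mp2\gamma_n\sum_j z_j x_{j-k}$, and $\gamma_n^2\sum_j x_{j-k}^2$ is the quadratic penalty the coarse grid costs.

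It remains to justify the interiority used above, namely that for $n$ large the entries of $\tilde h$ are bounded away from $P_n$, so that $\tilde h^{+k}$ and $\tilde h^{-k}$ both belong to $\irott H_n$. I would get this by comparing $Q(\tilde h)$ with $Q(h^{\circ})$, where $h^{\circ}\in\irott H_n$ is the coordinatewise nearest grid point to the true ISI vector $h$ of (\ref{modell}); $h^{\circ}$ lies well inside the box once $P_n>\|h\|_\infty$, which holds for $n$ large since $h$ is fixed and $P_n=n^{1/16}\to\infty$. Writing $A$ for the convolution (Toeplitz) matrix of $\Vx$, equations (\ref{lem0-0})–(\ref{lem0-1}) give, via a Gershgorin estimate, $n(1-o(1))\|v\|^2\le\|Av\|^2\le n(1+o(1))\|v\|^2$ uniformly in $v\in\R^{l_n}$; hence $\|(h-h^{\circ})*\Vx\|^2\le n(1+o(1))(l_n+1)\gamma_n^2/4=o(n)$, and together with $\|\Ve\|^2=O(n)$ for the noise $\Ve\circeq\Vy-h*\Vx$ of (\ref{modell}) — a $\chi^2$ concentration bound of exactly the kind used for Lemma \ref{Lem0} — this yields $Q(\tilde h)\le Q(h^{\circ})=\|(h-h^{\circ})*\Vx+\Ve\|^2=O(n)$. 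On the other hand $\sqrt{Q(\tilde h)}=\|(h-\tilde h)*\Vx+\Ve\|\ge\|(h-\tilde h)*\Vx\|-\|\Ve\|\ge\sqrt{n(1-o(1))}\,\|h-\tilde h\|-O(\sqrt n)$, so $\|h-\tilde h\|=O(1)$ and therefore $\|\tilde h\|_\infty=O(1)\ll P_n$: $\tilde h$ is deep in the interior, and $\tilde h^{\pm k}\in\irott H_n$ for every $k$.

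I expect the first paragraph to be entirely routine. The only mild obstacle is the interiority step, and within it the need to control the typical size of $\Vy$ (equivalently the noise power $\|\Ve\|^2$), which is not literally among (\ref{lem0-0})–(\ref{lem0-2}); the cleanest fix is to enlarge the good set $\irott A$ of Lemma \ref{Lem0} by this one extra high-probability event, after which the whole argument is deterministic given $\Vx\in\irott A$. One should also reconcile the index conventions (the status of $h_0$, the exact ranges of the sums), but that is bookkeeping rather than a mathematical difficulty.
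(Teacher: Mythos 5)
Your first paragraph is essentially the paper's own argument: the paper perturbs the minimiser $\tilde h$ in the $k$-th coordinate by one grid step $\gamma_n$ and uses (\ref{lem0-1}) to show that if $\frac1n\sum_j z_jx_{j-k}>\gamma_n$ the perturbed point would have strictly smaller residual, contradicting minimality; your two-sided first-order-condition version is the same computation and in fact proves slightly more (an absolute-value bound, where the statement and the paper's contradiction are one-sided). The only genuine difference is your second paragraph: the paper silently assumes the perturbed point $\tilde h^{+k}$ still lies in $\irott H_n$, never checking the constraint $|h_i|<P_n$, whereas you supply an interiority argument. That addition is welcome, but note it changes the character of the lemma: it is no longer deterministic given $\Vx\in\irott A$, since it needs a high-probability bound on the noise power $\|\Vy-h*\Vx\|^2$ (hence, as you say, an enlargement of the good set of Lemma \ref{Lem0}), and it tacitly assumes the true ISI and $\sigma$ are $O(1)$, while Theorem \ref{mainthm} allows them to grow up to $P_n$; for parameters near the boundary your estimate $\|\tilde h\|_\infty\le\|h\|_\infty+O(\sigma)$ does not by itself keep $\tilde h^{\pm k}$ inside the box. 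So: same core proof as the paper, plus a patch for a boundary case the paper ignores, which should be stated together with the extra hypotheses it actually uses.
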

\begin{proof}
 (Indirect) Suppose that
\[\frac{\sum_{j=1}^n z_jx_{j-k}}{n} = \lambda_k > \gamma_n \]
for some $k \in \{0\dots l_n\}$. Then let
\[ \hat{h}_j= \left\{ \sumfrac{\tilde{h}_j \, \textnormal{if}\, j\neq k }{\tilde{h}_j+\gamma_n \, \textnormal{if}\, j=k} \right. \]
We will show, that $\|\Vy-\hat{h}*\Vx_i \|< \|\Vy-\tilde{h}*\Vx_i\|$, which contradicts to the definition of $\tilde{h}$). 

Now,
\begin{align*}
 &\|\Vy-\hat{h}*\Vx_i\|^2=\sum_{j=1}^n (y_j-\sum_{g=0}^{l_n} \hat{h}_g x_{j-g})^2=\\
 &=\sum_{j=1}^n (y_j-\sum_{g=0}^{l_n} \tilde{h}_g x_{j-g}-\gamma_n x_{j-k})^2=\\
 &=\sum_{j=1}^n (z_j-\gamma_n x_{j-k})^2=\sum_{j=1}^n (z_j^2 - 2 \gamma_n z_j x_{j-k} + \gamma_n^2 x_{j-k}^2 )
\end{align*}
On account of (\ref{lem0-0}), 
\begin{align*}
 &\leq \|\Vz_i\|^2 -2 n\gamma_n \lambda_k + \gamma_n^2(n+\gamma_n)\leq \|\Vz_i\|^2 -(n-\gamma_n)\gamma_n^2 \\ &=\|\Vy-\tilde{h}*\Vx_i\|-n\gamma_n^2+\gamma_n^3<\|\Vy-\tilde{h}*\Vx_i\|
\end{align*}
\end{proof}
\begin{Lem}\label{lem4}
 Let $\delta>0$, and $\Vx \in \irott{A}$ from Lemma \ref{Lem0}. Let $h,\sigma\in \irott H\times \irott V$ and  $h^o,\sigma^o\in \irott H \times \irott V$ be two arbitrarily (ISI function, variance) pairs. Let $\Vy$ and $\Vx$ be such that 
\begin{align}
&\Vy \in \irott{T}_{Y|X}^{h,\sigma}(\Vx)\\ 
&h=\argmin_{h\in \irott{H}(n)}\|\Vy-h* \Vx\|\\
&\sigma^2=\|\Vy-h* \Vx\|/n
\end{align}
Then
\[ p_{Y|X}^{h^o,\sigma^o}(\Vy|\Vx) \leq 2^{-n [d((h,\sigma)\|(h^o,\sigma^o))-\delta]-H_{h,\sigma} (Y|X) } \]
Here \begin{math}
d((h,\sigma)\|(h^o,\sigma^o))=-\frac{1}{2}\log(\frac{\sigma^2}{\sigma_o^2})-\frac{1}{2}+\frac{\sigma^2+\|h-h^o\|^2}{2*\sigma_o^2} \label{szam} 
\end{math}
 is an information divergence for Gaussion distributions, positive if $(h,\sigma)\neq (h^o,\sigma^o)$.
\end{Lem}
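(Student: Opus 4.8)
The plan is to expand $-\log p_{Y|X}^{h^o,\sigma^o}(\Vy|\Vx)$ directly from the Gaussian density and show that, up to an $o(n)$ error, it equals $H_{h,\sigma}(Y|X)+n\,d((h,\sigma)\|(h^o,\sigma^o))$. From the definition of $P_{Y|X}^{h^o,\sigma^o}$,
\[ -\log p_{Y|X}^{h^o,\sigma^o}(\Vy|\Vx)=n\log\sigma^o+\tfrac{n+l_n}{2}\log(2\pi)+\frac{\|\Vy-h^o*\Vx\|^2}{2(\sigma^o)^2}, \]
so everything reduces to controlling $\|\Vy-h^o*\Vx\|^2$. Put $\Vz=\Vy-h*\Vx$; reading the third hypothesis as $\sigma^2=\|\Vy-h*\Vx\|^2/n$ we have $\|\Vz\|^2=n\sigma^2$. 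Since $\Vy-h^o*\Vx=\Vz+(h-h^o)*\Vx$,
\[ \|\Vy-h^o*\Vx\|^2=\|\Vz\|^2+2\langle\Vz,(h-h^o)*\Vx\rangle+\|(h-h^o)*\Vx\|^2. \]

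The two non-leading terms are the crux. For the cross term, $h=\argmin_{h\in \irott{H}(n)}\|\Vy-h*\Vx\|$, so Lemma~\ref{lem3} applies (with its $\tilde h$ equal to our $h$); using its two-sided form — obtained by perturbing a coordinate of $h$ by $-\gamma_n$ as well as $+\gamma_n$ — gives $|\sum_j z_j x_{j-k}|\leq n\gamma_n$ for each $k\in\{0,\dots,l_n\}$, whence
\[ |\langle\Vz,(h-h^o)*\Vx\rangle|\leq\sum_{k}|h_k-h^o_k|\,\Big|\sum_j z_j x_{j-k}\Big|\leq(l_n+1)\,2P_n\,n\gamma_n. \]
For the quadratic term, expand $\|(h-h^o)*\Vx\|^2=\sum_{k,k'}(h-h^o)_k(h-h^o)_{k'}\sum_j x_{j-k}x_{j-k'}$ and apply (\ref{lem0-1}) on the diagonal ($\sum_j x_{j-k}^2=n(1+O(\gamma_n))$) and (\ref{lem0-0}) off it ($|\sum_j x_{j-k}x_{j-k'}|<n\gamma_n$), so that $\|(h-h^o)*\Vx\|^2=n\|h-h^o\|^2+E$ with $|E|\leq(l_n+1)^2(2P_n)^2 n\gamma_n$. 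With $l_n=[\log_2 n]$, $P_n=n^{1/16}$, $\gamma_n=n^{-1/4}$ both error bounds are $o(n)$, hence $\|\Vy-h^o*\Vx\|^2=n(\sigma^2+\|h-h^o\|^2)+o(n)$.

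It remains to assemble. Substituting $H_{h,\sigma}(Y|X)=n[\tfrac12\log(2\pi e)+\log\sigma]$ and the last display,
\[ -\log p_{Y|X}^{h^o,\sigma^o}(\Vy|\Vx)-H_{h,\sigma}(Y|X)=n\Big[\log\tfrac{\sigma^o}{\sigma}-\tfrac12+\frac{\sigma^2+\|h-h^o\|^2}{2(\sigma^o)^2}\Big]+o(n), \]
and the bracket is precisely $d((h,\sigma)\|(h^o,\sigma^o))$ because $\log\tfrac{\sigma^o}{\sigma}=-\tfrac12\log\tfrac{\sigma^2}{(\sigma^o)^2}$; the stray $\tfrac{l_n}{2}\log(2\pi)$ from the $(2\pi)^{(n+l_n)/2}$ normalization is $O(\log n)=o(n)$ and gets absorbed. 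Taking $n$ large enough that the aggregate $o(n)$ term is $\leq n\delta$ yields $p_{Y|X}^{h^o,\sigma^o}(\Vy|\Vx)\leq 2^{-n[d((h,\sigma)\|(h^o,\sigma^o))-\delta]-H_{h,\sigma}(Y|X)}$. For the asserted positivity, write $d=\tfrac12(t-1-\log t)+\tfrac{\|h-h^o\|^2}{2(\sigma^o)^2}$ with $t=\sigma^2/(\sigma^o)^2$ and use $t-1-\log t\geq0$ with equality iff $t=1$; this also identifies $d$ as the per-symbol Kullback--Leibler divergence between the two Gaussian conditional laws.

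The main obstacle is not conceptual but the bookkeeping in the two error estimates: one must check that the coarse-graining $\gamma_n$ is fine enough to beat not only $l_n$ but the growing power bound $P_n\to\infty$ — note that $\|h-h^o\|^2$ itself is unbounded — so that both $(l_n+1)P_n\gamma_n$ and $(l_n+1)^2P_n^2\gamma_n$ tend to $0$; one should also verify the two-sided version of Lemma~\ref{lem3}, including the coordinates of $h$ lying on the boundary of the grid $\irott{H}_n$, and confirm that the $n$ versus $n+l_n$ slack in the density normalization (and in the dimension of $\Vy$) is genuinely $O(\log n)$ and therefore harmless.
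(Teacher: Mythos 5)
Your proposal is correct and follows essentially the same route as the paper: decompose $\|\Vy-h^o*\Vx\|^2$ through $\Vz=\Vy-h*\Vx$, kill the cross term with Lemma \ref{lem3}, control the quadratic term with (\ref{lem0-0})--(\ref{lem0-1}), and check that $l_n^2P_n^2\gamma_n\to 0$; the paper merely rearranges the same algebra by factoring through the ratio $p^{h^o,\sigma^o}/p^{h,\sigma}$ and invoking the typicality hypothesis instead of your exact use of $\|\Vz\|^2=n\sigma^2$. Your explicit two-sided strengthening of Lemma \ref{lem3} (perturbing by $-\gamma_n$ as well as $+\gamma_n$) is in fact needed, since $h_k-h^o_k$ can have either sign, and is silently assumed in the paper's own estimate.
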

\begin{proof}
 \begin{align} 
&P_{Y|X}^{h^o,\sigma^o}(\Vy|\Vx_i) = 2^{-n\left[-\frac{1}{n}\log\left(\frac{P_{Y|X}^{h^o,\sigma^o}(\Vy|\Vx_i)}{P_{Y|X}^{h,\sigma}(\Vy|\Vx_i)}\right)\right]+\log(P_{Y|X}^{h,\sigma}(\Vy|\Vx_i))]}\label{lem4-0} \end{align}

 and $\Vy \in \irott{T}_{Y|X}^{h,\sigma}(\Vx)$ by the definition, so: 
\begin{align}
-\log(P_{Y|X}^{h,\sigma}(\Vy|\Vx_i)) &\geq \Hh_{h,\sigma}(Y|X)-\gamma_n> \Hh_{h,\sigma}(Y|X)-\frac{\delta}{3} \label{lem4-3}                                   \end{align}
if $n$ is large enough.
 With this:
 \begin{align}
  &\log\left(\frac{P_{Y|X}^{h^o,\sigma^o}(\Vy|\Vx_i)}{P_{Y|X}^{h,\sigma}(\Vy|\Vx)}\right)=\log\left(\frac  {\frac{1}{(2\Pi)^\frac{n}{2}\sigma_o^n}exp(-\frac{\|\Vy-h*\Vx\|^2}{2\sigma_o^2})}
  {\frac{1}{(2\Pi)^\frac{n}{2}\sigma^n}exp(-\frac{\|\Vy-h*\Vx\|^2}{2\sigma^2})}\right)=\notag\\
  &n\log(\frac{\sigma}{\sigma_o})+\frac{\|\Vy-h*\Vx\|^2}{2\sigma^2}-\frac{\|\Vy-h^o*\Vx\|^2}{2\sigma_o^2}\leq\notag\\
  &\frac{n}{2}\log(\frac{\sigma^2}{\sigma_o^2})+\frac{n+n\gamma_n}{2}-\frac{\|\Vy-h^o*\Vx\|^2}{2*\sigma_o^2}\label{lem4-1}
 \end{align}
 Introduce the following notation $\Vz=\Vy-h*\Vx$, Then
 \begin{align*}
  &\|\Vy-h^o*\Vx\|^2=\|\Vz+h*\Vx -h^o*\Vx \|^2=\\
  &=\sum_{j=1}^n(z_j + \sum_{k=0}^{l_n} (h_k-h^o_k) x_{j-k})^2=\\
  &\sum_{j=1}^n(z_j^2+2z_j\sum_{k=0}^{l_n} (h_k-h^o_k) x_{j-k}+(\sum_{k=0}^{l_n} (h_k-h^o_k) x_{j-k})^2\\
  &=\|\Vz\|+2\sum_{k=0}^{l_n} (h_k-h^o_k)\sum_{i=1}^n z_jx_{j-k}+\\
  &+\sum_{j=1}^n(\sum_{k=0}^{l_n} (h_k-h^o_k) x_{j-k})^2=
 \end{align*} 
using Lemma \ref{lem3} 
\[\sum_{j=0}^n z_jx_{i,j-k}< n\gamma_n\]
 \begin{align*}
  &=\|\Vz\|+2\sum_{k=0}^{l_n} (h_k-h^o_k) n\gamma_n+\sum_{k=0}^{l_n} (h_k-h^o_k)^2 \sum_{j=0}^nx_{i,j-k}^2 +\\
  &+\sum_{j=0}^n\sum_{k\neq m}^{l_n}(h_m-h^o_m)(h_k-h^o_k)x_{i,j-k}x_{i,j-m}\geq \\
  &\geq \|\Vz\|-n4l_nP_n\gamma_n+(n-n\gamma_n)\|h-h^o\|^2-n(l_nP_n)^2\gamma_n
 \end{align*}
With this we can bound (\ref{lem4-1})
 \begin{align}
  &-\frac{1}{n}\log\left(\frac{P_{Y|X}^{h^o,\sigma^o}(\Vy|\Vx_i)}{P_{Y|X}^{h,\sigma}(\Vy|\Vx_i)}\right)=\notag\\
  &-\frac{1}{2}\log(\frac{\sigma^2}{\sigma_o^2})-\frac{1}{2}+\frac{\sigma^2+\|h-h^o\|^2}{2*\sigma_o^2}-\frac{\gamma_n}{2}-4l_nP_n\gamma_n\notag\\
  &-\gamma_nl4P_n^2-(l_nP_n)^2\gamma_n= \label{lem4-2}
 \end{align}
while $\|\Vz\|=n\sigma^2$. If $n$ large enough, then \[\max(4l_nP_n\gamma_n,4\gamma_n l_nP_n^2,(l_nP_n)^2\gamma_n)<\frac{\delta}{6}\] since $\lim_{n \to \infty}P_n^2 l_n^2 \gamma_n=0$. Using (\ref{szam}) we continue from (\ref{lem4-2})
 \begin{align*}
  &=d((h,\sigma)\|(h^o,\sigma^o))-\frac{\gamma_n}{2}\\
  &-(4l_nH_n\gamma_n+\gamma_nl4H_n^2+(l_nH_n)^2\gamma_n)\geq\\
  &\geq d((h,\sigma)\|(h^o,\sigma^o)) -\delta/2
 \end{align*}
Substituting this and (\ref{lem4-3}) to (\ref{lem4-0}) gives the desired result.
\end{proof}
Now we can state, and prove our main theorem
\begin{Tet}\label{mainthm}
 For arbitrarily given $R>0$ $\varepsilon>0$, and blocklength $n>n_0(R,\epsilon)$, there exist a code $(f,\phi)$ (coding/decoding function pair), with rate $\geq R-\varepsilon$
such that for all ISI channels, with parameters $h^o\in \mathbb R^{l_n}, |h_i^o|<P_n, \sigma^o<P_n, \sigma^o\neq 0$, the average error probability satisfies
\[ P_e(h^o,\sigma^o,f,\phi)\leq 2^{-n(E_r(R,h^o,\sigma^o)-\varepsilon)} \]
Here
\begin{align}
 &E_r(R,h^o,\sigma^o)\circeq \\
 &\min_{h,\sigma}\{d((h,\sigma)\|h^o,\sigma^o))+|\I(h,\sigma)-R|_+\} \label{sajatexp}
\end{align} 
where $d((h,\sigma)\|h^o,\sigma^o))$ is the information divergence (\ref{szam}), 
\end{Tet}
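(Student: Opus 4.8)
The plan is to run the classical random-coding argument of Csiszár–Körner, now adapted to the ISI setting via the ``method of types'' machinery built up in the lemmas. Fix $R>0$, $\varepsilon>0$, and apply Lemma \ref{Lem1} with a small auxiliary parameter $\delta$ (to be chosen of order $\varepsilon$) and with the rate $R$: this produces a codebook $\{\Vx_1,\dots,\Vx_M\}\subset\irott{A}$ with $M\geq 2^{n(R-\delta)}$, whose codewords all satisfy the conclusions of Lemma \ref{Lem0}, and for which the ``overlap'' bound \eqref{Pack-Lem} holds for every index $i$ and every pair of parameter points $(h,\sigma),(\hat h,\hat\sigma)\in\irott S_n$. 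Define the encoder $f$ to map message $i$ to $\Vx_i$, and define the decoder $\phi$ to be the generalised MMI rule: on receiving $\Vy$, compute the ISI type $(h(i),\sigma(i))$ of each pair $(\Vx_i,\Vy)$ (Definition \ref{ISItype}), and output the index $i$ maximising $\I(h(i),\sigma(i))$ among those $i$ with $\Vy\in\irott{T}_{P_{Y|X}^{h(i),\sigma(i)}}(\Vx_i)$ (ties broken arbitrarily; declare error if no such $i$ exists).

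Next I would bound the average error probability for a channel with true parameters $(h^o,\sigma^o)$ (rounded into $\irott S_n$ at negligible cost, since $\gamma_n\to 0$ and the divergence $d$ is continuous). Condition on message $i$ sent; the received word is $\Vy=h^o*\Vx_i+\Vz$. There are two ways an error occurs. First, $\Vy$ might fail to be typical for the estimated pair $(h(i),\sigma(i))$: but by Lemma \ref{lem3} the residual noise after ML fitting is nearly orthogonal to the shifted codewords, hence $\|\Vy-h(i)*\Vx_i\|^2/n$ concentrates and the first Lemma of the section gives $\Vy\in\irott{T}_{P_{Y|X}^{h(i),\sigma(i)}}(\Vx_i)$ with the estimated parameters close to $(h^o,\sigma^o)$; this contributes a vanishing term. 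Second, some competing $\Vx_j$ ($j\neq i$) might have $\Vy$ typical for its own estimated pair $(\hat h,\hat\sigma)$ with $\I(\hat h,\hat\sigma)\geq \I(h(i),\sigma(i))\approx \I(h^o,\sigma^o)$. For fixed candidate parameter pairs $(h,\sigma)\approx(h^o,\sigma^o)$ and $(\hat h,\hat\sigma)$, the set of such bad $\Vy$ lies in $\irott{T}_{P_{Y|X}^{(h,\sigma)}}(\Vx_i)\cap\bigcup_{j\neq i}\irott{T}_{P_{Y|X}^{(\hat h,\hat\sigma)}}(\Vx_j)$, whose Lebesgue measure is bounded by \eqref{Pack-Lem}; multiplying by the true channel density, which Lemma \ref{lem4} bounds by $2^{-n[d((h,\sigma)\|(h^o,\sigma^o))-\delta]-H_{h,\sigma}(Y|X)}$ on exactly this typical set, the factors $H_{h,\sigma}(Y|X)$ cancel and the contribution of this pair is at most $2^{-n[|\I(\hat h,\hat\sigma)-R|_+ + d((h,\sigma)\|(h^o,\sigma^o))-2\delta]}$.

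Finally I would sum over the choices of $(h,\sigma)$ and $(\hat h,\hat\sigma)$ in $\irott S_n$. The number of such pairs is $|\irott S_n|^2$, which by the choice $l_n=[\log_2 n]$, $P_n=n^{1/16}$, $\gamma_n=n^{-1/4}$ grows only subexponentially in $n$ (it is $2^{o(n)}$), so it is absorbed into the $\varepsilon$ in the exponent. Taking the minimum over all admissible competitor parameters $(\hat h,\hat\sigma)$ and over the near-$(h^o,\sigma^o)$ estimate $(h,\sigma)$ yields precisely the exponent $E_r(R,h^o,\sigma^o)=\min_{h,\sigma}\{d((h,\sigma)\|(h^o,\sigma^o))+|\I(h,\sigma)-R|_+\}$ of \eqref{sajatexp}; the loss $R-\delta$ in the code size translates to the rate $\geq R-\varepsilon$. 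I expect the main obstacle to be the bookkeeping at the boundary: making rigorous that the ML-estimated type $(h(i),\sigma(i))$ is close enough to the true $(h^o,\sigma^o)$ (in particular that the estimated variance falls in the interior of $\irott V_n$ so the first Lemma applies), and that the several $O(\gamma_n)$, $O(l_nP_n\gamma_n)$, $O((l_nP_n)^2\gamma_n)$ slack terms accumulated across Lemmas \ref{Lem1}, \ref{lem3}, \ref{lem4} — together with the $\log|\irott S_n|/n$ term — can all be made smaller than $\varepsilon$ simultaneously for $n\geq n_0(R,\varepsilon)$. The continuity of $d$ and $\I$ in their arguments, and the fact that $P_n^2 l_n^2\gamma_n\to 0$, are what make this possible.
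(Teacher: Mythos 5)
Your proposal follows essentially the same route as the paper's proof: a codebook from the packing lemma (Lemma \ref{Lem1}), the generalised MMI decoder based on ISI types, an outage term for the estimated variance falling outside the interior of $\irott V_n$ (which the paper bounds superexponentially via the received power), and the confusion term bounded by summing over parameter pairs in $\irott S_n$, combining \eqref{Pack-Lem} with Lemma \ref{lem4} so that the $\Hh_{h,\sigma}(Y|X)$ terms cancel and the subexponential size of the parameter grid is absorbed into $\varepsilon$. The only cosmetic deviations (restricting the decoder to typical indices, and the unnecessary remark that the estimate $(h(i),\sigma(i))$ is close to $(h^o,\sigma^o)$ --- the divergence term in the sum over all $(h,\sigma)$ already accounts for estimates far from the truth) do not change the argument.
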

\begin{remark}\label{megj2}
The expression minimised above is a continuous function of $h^o, \sigma^o,  h, \sigma, R$
\end{remark}

\begin{proof}
 Let $\delta=\varepsilon/3$, and let 
\[C=\{\Vx_1,\Vx_2,\dots,\Vx_M\}\] 
the set of codesequences from Lemma \ref{Lem1}, so $M\geq 2^{n(R-\delta)}$. The coding function sends the $i$-th codeword for message $i$, $f(i)=\Vx_i$.

The decoding happens as follows:\
Let denote the ISI-type of $\Vy,\Vx_i$ by $h(i),\sigma(i)$ for all $i\in \{1,2,\dots,M\}$. Using these parameters we define the decoding rule as follows 
\[\phi(\Vy)=i \iff i=\argmax\limits_j I(h(j),\sigma(j))\]
in case of non-uniqueness, we declare an error.

Now we bound the error
 \begin{align}
  P_e=P_{out}+\frac{1}{M}\sum_{i=1}^{M}P^{h^o,\sigma^o}_{Y|X}(\phi(\Vy)\neq i|\Vx_i,e_1^c) \label{hibavsz}
 \end{align}
where $P_{out}$ denotes the probability of event $\irott E$ that the detected variance, for some $i \in \{1,2,\dots,M\}$ does not satisfy $|\sigma(i)^2-\frac{\|\Vy-h(i)*\Vx_i\|^2}{n}|< \gamma_n$. Bound the probability of this event. If $\irott E$ occurs then $\sigma(i)$ is extremal point of the approximating set of parameters, so $\\|\Vy-h(i)*\Vx_i\|^2>nP_n$. Since $h=(0,0,\dots,0)$ is element of the approximating set of ISI, this means that the power of the incoming sequence is greater than $nP_n$, the probability of this 
\[P_{out}\leq \left( 2 \int_{P_nl_n}^\infty \frac{1}{\sigma^n (2\pi)^{n/2}} e^\frac{-|z|^2}{2\sigma^2} \right)^n= \]
\[=(2\textnormal{erfc}(\frac{P_nl_n}{\sigma}))^n << e^{-n^{1+1/8}}\]
where $\textnormal{erfc}(\cdot)$ the complement normal error function. this probability converges to 0 faster that exponential, which - as we will see - means that in (\ref{hibavsz}) the second term is the dominant.

Consider the second term from (\ref{hibavsz}). If we sent $i$ then $\phi(\Vy)\neq i$ occurs if and only if 
\[\I(h(j),\sigma(j))\geq \I(h(i),\sigma(i))\]
 We know that
\[ \Vy \in  \irott{T}_{P_{Y|X}^{h(i),\sigma(i)}} (\Vx_i)\cap \irott{T}_{P_{Y|X}^{h(j),\sigma(j)}}(\Vx_j) \]
while we supposed that we are not in the event  $\irott E$.

So the probability of the second term of (\ref{hibavsz})
\begin{align*}
  &P_{Y|X}^{h^o,\sigma^o}(\phi(\Vy)\neq i|\Vx_i,e_1^c)= \\
   &\sum_{\sumfrac{(\hat{h},\hat\sigma),(h,\sigma) \in (\irott{H}_n,\irott{S}_n),(\irott{H}_n,\irott{S}_n)}{\I(\hat{h},\hat\sigma)\geq \I(h,\sigma)}}  \int\limits_{\sumfrac{\{ \irott{T}_{P_{Y|X}^{h(i),\sigma(i)}} (\Vx_i)\cap}{\irott{T}_{P_{Y|X}^{h(j),\sigma(j)(\Vx_j)\}}} }} P_{Y|X}^{h^o,\sigma^o}(\Vy|\Vx_i)\ud\Vy
 \end{align*}
With Lemma \ref{lem4} (substituting $(h,\sigma)=(h(j),\sigma(j))$, $(\hat{h},\hat\sigma)=(h(i),\sigma(i))$, $\delta=\varepsilon/3$) and from Lemma \ref{Lem1} we get
 \[P_e \leq \hspace{-25pt}\sum_{\sumfrac{(\hat{h},\hat\sigma),(h,\sigma) \in (\irott{H}_n,\irott{V}_n)^2}{\I(h(j),\sigma(j))\geq \I(h(i),\sigma(i))}} \hspace{-33pt}2^{-n\min\limits_{h,\sigma}[d((h(i),\sigma(i))\|(h^o,\sigma^o))+|\I(h(j),\sigma(j))-R|_{+}-2\varepsilon/3]} \]
if $n$ is large enough. From this - since the number of the approximating channel parameters grows subexponentially - we get
 \[P_e \leq 2^{-n\min_{h,\sigma}[d(\sigma^o\|\sigma)+|\I(h,\sigma)-R|_{+}-\varepsilon]} \]
\end{proof}

\section{Numerical Result} \label{numericalr}
We compare the new error exponent with Gallager's error exponent. Gallager derived the method to send digital information through channel with continuous time and alphabet, with given channel function. This result can be easily modified to discrete time, as in e.g. \cite{RCEEX,Shamai}.
The Linear Gaussian channel with discrete time parameter, with fading vector $h=(h_0,h_1,\dots,h_l)$can be  formulated as follows:
\begin{align}
\Vy=\uuline{H}\Vx + \Vz 
\end{align}
where $\Vx$ is the input vector and 
\begin{align}
 \uuline{H}=\left[ 
\begin{matrix}
             	h_0 & 0 & 0 & \dots &0 \\
		h_1 & h_0 & 0 & \dots &0 \\
		\vdots & \vdots & \ddots & \dots &\vdots \\
		h_l & h_{l-1} & h_{l-2} & \dots &0 \\
		0 & h_l & h_{l-1} & \dots &0 \\
		\vdots & \vdots & \vdots & \ddots &\vdots \\
		0 & 0 & 0 & \dots & h_l 
 \end{matrix} 
\right]
\end{align}
From \cite{Gallager} we get the idea to define a right and a left eigenbasis for  the matrix $H$. So the right eigenbasis $\Vr_1 ,\Vr_2 ,\dots ,\Vr_n$, is not else then the eigenbasis of $H^T H$ where  $^T$ means the transpose. And $\Vl_1,\Vl_2,\dots,\Vl_m$ is the left eigenbasis, where $\Vl_i = \frac{\uuline H \Vr_i}{|\uuline H \Vr_i|}$ (or basis of $H H^T$), and denoting $\lambda_i=|\uuline H \Vr_i|$. As in the work of Gallager $\Vr_i \Vr_j =\delta_{i,j}=\Vl_i^* \Vl_j$, because $\Vr_i$-s form an orthonormal eigenbasis and $\Vl_i \Vl_j = \frac{\Vr_i \uuline H^T  \uuline H \Vr_i}{|\Vr_i \uuline H^T  \uuline H \Vr_i|}=\Vr_i \Vr_j$.

So write $\Vx$ in a good basis $\Vr_1,\dots,\Vr_n$ we get $\Vx=\sum_{i=1}^n \tilde x_1 \Vr_i$ - we know that $\tilde x_i$-s form also an i.i.d. gaussian sequence. Write the output as $\Vy = \sum_{i=1}^n \tilde y_i \Vl_i$, we get
\begin{align}
 \tilde y_i= \lambda_i \tilde x_i +\tilde z_i
\end{align}
 for all $i$, where $\tilde z_i$ is the white Gaussian noise in the basis of $\Vl_1,\dots,\Vl_m$ (in which is also white). In many works this equation is used as a channel with fading, where $\lambda_i$s are i.i.d. random variables. This is a false approach. If the receiver knows the ISI $h$ then these constant can be computed. Is the ISI is a random vector from, e.g., i.i.d. random variables, then $\lambda_i$s are not necessarily i.i.d. 

If we interlace our codeword, with this formula we can get $n'=n+l$ parallel channels, each have SNR $\lambda_i/\sigma$.
We know that the error exponent given by Gallager for the $i$-th channel is
\[ E_r(\rho,\lambda_i)=-\frac{1}{n'}\ln\left[\int_y \left( \int_x q(x)p(y|x,\lambda_i)^{1/(1+\rho)} dx \right)^{1+\rho} \right]\]
If the input distribution $q(x)$ is the optimal, Gaussian distribution, with use of \cite{Shamai} the above expression can be rewritten as
\[E_r(\rho,\lambda_i)=-\frac{1}{n'}\sum_{i=1}^{n'}ln\left(1+\frac{\lambda_i}{\sigma(1+\rho)}\right)^{-\rho}\]
Now we can use the Szegő theorem from \cite{Toeplitz}, and we get that the average of the exponent, so the exponent of the system is
\[E_r(\rho)=\frac{1}{2\pi} \integ_{0}^{2\pi} -\ln \left[\left( 1+\frac{f(x)}{\sigma(1+\rho)}\right)^{-\rho}\right] dx \]
where $f(x)=\sum_{k=-\infty}^{\infty}(\sum_{j=0}^{l-|k|}h_{j}h_{j+|k|})e^{ik x}$, which is same as \cite{RCEEX,Shamai}

In the simulation we simulated a 4 dimensional fading vector whose components was randomly generated with uniform random distribution in $[0,1]$. For other randomly generated vectors, we get similar result. The two error exponent were positive in the same region, but for surprise the new error exponent was better (higher) than Gallager's one.



Figure \ref{2.abra} shows, that the new error exponent is always as good, or better than the Gallager's one.

\begin{figure}[ht]
\psfrag{Difference}{Difference}
\psfrag{SNR}{\parbox{10pt}{SNR (dB)}}
\psfrag{R}{R}
\psfrag{-10}{-10}
\psfrag{10}{10}
\psfrag{20}{20}
\psfrag{01}{0}
\psfrag{02}{0}
\psfrag{1}{1}
\psfrag{2}{2}
\psfrag{3}{3}
\psfrag{4}{4}
\psfrag{5}{5}
\psfrag{03}{0}
\psfrag{0.15}{0.15}
\psfrag{0.1}{0.1}
\psfrag{0.05}{0.05}
\includegraphics[width=8.75cm]{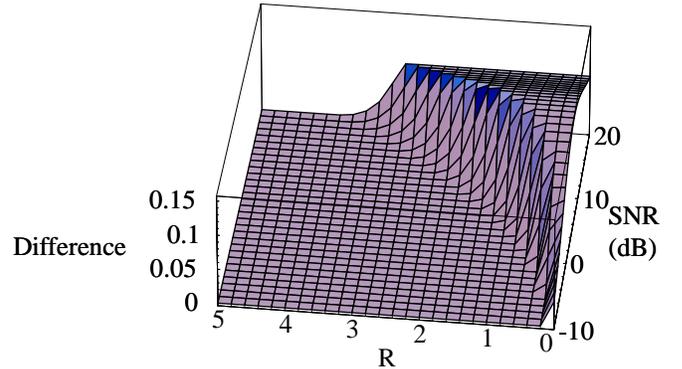}
\caption{Difference of the new error exponent and the Gallager's error exponent}\label{2.abra}
\end{figure}
The new method gives better error exponent, however it can be hardly computed. We could estimate the difference only in 4 dimension, because of the computational hardness to give the global optimum of a 4 dimensional function.

\section{Discussion} \label{Discussion}
Firstly our result can be used as a new lower bound to error exponents with no CSI at the transmitter. Note that, our error exponent (\ref{sajatexp}) is positive if the rate is smaller than the capacity.

Secondly it gives a new idea for decoding incoming signals without any CSI: Maybe it is worth to perform a more difficult maximisation, but not dealing with channel estimation. This can be done because of the universality of the code, which means, the detection method doesn't depend on the channel.

We have proved that if the ISI fullfills some criteria (see Theorem \ref{mainthm}), then the message can be detected, with exponentially small error probability. However these criteria can be relaxed, because in the Theorem $P_n\rightarrow \infty$ and $l_n\rightarrow \infty$, so any ISI with finite length and finite energy, and finite white noise variance, can be approximated well via the approximating set of parameters. 

It can be easily seen, that the lemmas and theorem remain true with small changes, if the input distribution is an arbitrarily chosen i.i.d. (absolute continuous or discrete) distribution. Only the  functional form of the mutual information $I(\cdot)$, and the entropy of the output variable $H_{(\cdot)}(Y)$ changes. So, this result can be used for lower bounding the error exponent, if non-gaussian i.i.d. random variables are used for the random selection of the codebook. However in these cases the entropy of the output can hardly be expressed in closed form.

With the result of Theorem \ref{mainthm}, we can define channel capacity for compound fading channels. If the fading remains unchanged during the transmission, and the fading length satisfies $l_n << n$, we can state the following theorem:

\begin{Tet}
Let $\irott F$ be an arbitrarily given not necessarily finite set of channel parameters, then the capacity of the ISI channel without any CSI, with channel parameter from $\irott F$, is 
\[C(\irott F)=\inf_{(h,\sigma)\in \irott F} I(h,\sigma)\] 
\end{Tet}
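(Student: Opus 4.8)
The plan is to prove the two matching inequalities $C(\irott F)\ge\inf_{(h,\sigma)\in\irott F}I(h,\sigma)$ and $C(\irott F)\le\inf_{(h,\sigma)\in\irott F}I(h,\sigma)$, working throughout with the fixed i.i.d.\ standard Gaussian input ensemble of Section~\ref{Definitions}. The hypothesis $l_n\ll n$ enters twice: it makes the $l_n$ surplus output coordinates and the $l_n$-length convolution negligible in all per-symbol normalisations, exactly as in Theorem~\ref{mainthm}, and it guarantees that the design parameters $l_n=[\log_2 n]$, $P_n=n^{1/16}$ of Theorem~\ref{mainthm} eventually dominate the ISI length, energy and variance of every member of $\irott F$, so that that theorem applies to each of them for $n$ large.

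For the direct part I would fix $R<\inf_{(h,\sigma)\in\irott F}I(h,\sigma)$ and $\varepsilon>0$ and invoke Theorem~\ref{mainthm}: it produces, for $n\ge n_0(R,\varepsilon)$, one code of rate $\ge R-\varepsilon$ whose error probability on \emph{every} channel $(h^o,\sigma^o)$ is $\le 2^{-n(E_r(R,h^o,\sigma^o)-\varepsilon)}$, and since the code does not depend on the channel it serves the whole family $\irott F$. It remains to show $\inf_{(h^o,\sigma^o)\in\irott F}E_r(R,h^o,\sigma^o)>0$. For each fixed pair this follows from the explicit form~(\ref{szam}): $d((h,\sigma)\|(h^o,\sigma^o))$ is nonnegative, continuous, and tends to $+\infty$ as $\|h\|\to\infty$, $\sigma\to\infty$ or $\sigma\to 0$, so the minimand in~(\ref{sajatexp}) is coercive and attains its minimum at some $(h^\ast,\sigma^\ast)$; there either $(h^\ast,\sigma^\ast)=(h^o,\sigma^o)$, giving value $I(h^o,\sigma^o)-R>0$, or $d>0$, so in both cases $E_r(R,h^o,\sigma^o)>0$. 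Using the continuity asserted in Remark~\ref{megj2} together with compactness of the closure of $\irott F$, this positivity is uniform, $E_r\ge c>0$; choosing $\varepsilon<c$ then gives $\sup_{(h^o,\sigma^o)\in\irott F}P_e\le 2^{-n(c-\varepsilon)}\to 0$, and letting $R\uparrow\inf_{\irott F}I$ and $\varepsilon\downarrow 0$ shows every rate below $\inf_{(h,\sigma)\in\irott F}I(h,\sigma)$ is achievable.

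For the converse, a code reliable for all of $\irott F$ is in particular reliable on each single channel $(h^o,\sigma^o)\in\irott F$, so Fano's inequality gives $nR\le I(\mathrm{message};Y^n)+n\epsilon_n$ with $\epsilon_n\to 0$. Writing this as $I(\mathrm{message};Y^n)=H(Y^n)-H(Y^n\mid\mathrm{message})$, the conditional term equals $H_{h^o,\sigma^o}(Y|X)$, while $H(Y^n)$ is at most the entropy of a Gaussian with the same covariance; since the codewords lie in the typical set of the i.i.d.\ standard Gaussian ensemble their empirical covariance is $I_n+o(1)$ (cf.\ (\ref{lem0-0})–(\ref{lem0-1})), so that covariance is $\mathbf H\mathbf H^{T}+\sigma^o I+o(n)$ with $\mathbf H$ the convolution matrix. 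Dividing by $n$ and applying Szegő's theorem — exactly as in Section~\ref{Definitions} — the quantities $\tfrac1nH(Y^n)$ and $\tfrac1nH(Y^n\mid\mathrm{message})$ converge to the limits computed there, whose difference is $I(h^o,\sigma^o)$; hence $\tfrac1nI(\mathrm{message};Y^n)\to I(h^o,\sigma^o)$ and $R\le I(h^o,\sigma^o)$ for every $(h^o,\sigma^o)\in\irott F$, i.e.\ $R\le\inf_{(h,\sigma)\in\irott F}I(h,\sigma)$. Combined with the direct part this gives $C(\irott F)=\inf_{(h,\sigma)\in\irott F}I(h,\sigma)$.

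The hard part will be the uniformity of the error exponent in the direct part: if $\irott F$ is not (relatively) compact, the continuous function $E_r(R,\cdot,\cdot)$ may approach $0$ along $\irott F$, and one would only get $P_e(h^o,\sigma^o)\to 0$ for each fixed channel rather than the uniform decay a compound capacity statement demands. I would resolve this either by imposing such a hypothesis on $\irott F$ outright, or by noting that $\inf_{\irott F}I>R$ already confines the admissible parameters: e.g.\ $I(h^o,\sigma^o)\le\ln(1+\|h^o\|^2/\sigma^o)$ keeps $\|h^o\|^2/\sigma^o$ bounded below, which together with the finite-energy and finite-variance assumptions forces $\overline{\irott F}$ to be compact. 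A secondary caveat is that the converse is implicitly relative to the fixed i.i.d.\ Gaussian (typical) codebook convention of the paper — an unconstrained transmitter could shape its input spectrum and exceed $I(h^o,\sigma^o)$ over a single ISI channel — so the identification of the single-channel limit with $I(h^o,\sigma^o)$ must be read within that convention, which is the one defining $C(\irott F)$ here.
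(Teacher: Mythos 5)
Your direct part follows the paper's own route: invoke Theorem \ref{mainthm} to get one channel-independent code and use positivity and continuity of the exponent (\ref{sajatexp}); the paper adds only the density of $\irott H_n\times\irott V_n$ in parameter space together with Remark \ref{megj2}, and its converse is a single sentence asserting that the no-CSI capacity of each individual channel is $\I(h,\sigma)$, which your Fano--Szeg\H{o} sketch attempts to substantiate. The extra care is valuable, but the two patches you add do not close the holes you correctly identify.

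First, the uniformity patch fails as stated. From $\inf_{(h^o,\sigma^o)\in\irott F}\I(h^o,\sigma^o)>R$ you only get a lower bound on an SNR-type functional of $(h^o,\sigma^o)$; it gives no upper bound on $\|h^o\|$, on $\sigma^o$, or on the ISI length, and no lower bound on $\sigma^o$, while ``finite energy and finite variance of each member'' is a pointwise, not a uniform, statement. Hence $\overline{\irott F}$ need not be compact, and $\inf_{(h^o,\sigma^o)\in\irott F}E_r(R,h^o,\sigma^o)>0$ does not follow from Remark \ref{megj2}; one must either assume uniform bounds on $\irott F$ or prove a joint coercivity statement for the minimand of (\ref{sajatexp}) in $(h^o,\sigma^o)$. (To be fair, the paper's proof is silent on this point as well: it argues only via density of the approximating set and pointwise continuity, which gives per-channel, not uniform, decay of the error probability.)

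Second, the converse. The bounds (\ref{lem0-0})--(\ref{lem0-1}) of Lemma \ref{Lem0} control time-averaged autocorrelations within one codeword; the covariance you need for bounding $H(Y^n)$ is the ensemble average $\frac{1}{M}\sum_m x^{(m)}_i x^{(m)}_j$ across the codebook at fixed coordinate pairs, which those bounds do not control, so ``empirical covariance $=I_n+o(1)$'' is not justified by the lemmas cited. More fundamentally, as you yourself flag, without constraining the codebook to (near-)white inputs, Fano's inequality only yields $R\le$ the water-filling capacity of $(h^o,\sigma^o)$, which strictly exceeds $\I(h^o,\sigma^o)$ whenever $h^o$ is not flat; so the equality $C(\irott F)=\inf_{(h,\sigma)\in\irott F}\I(h,\sigma)$ is only a theorem under the white-Gaussian-input convention, and for a singleton $\irott F$ the unrestricted statement is false. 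This restriction is likewise hidden in the paper's one-line converse, so your caveat is not a ``secondary'' reading issue but the precise hypothesis under which the statement, and either proof of it, can be made correct.
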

\begin{proof}
 In the limit of the set $\irott H_n$ is dense in the space of the real sequences with any length, so for every $(h,\sigma)\in \irott F$ there exists a sequence $(h_n,\sigma_n)\in \irott H_n \irott V_n$ such that $(h_n,\sigma_n)\rightarrow (h,\sigma)$. We know from remark \ref{megj2} that the error exponent in Theorem \ref{mainthm} is a continuous function, so the Theorem \ref{mainthm} proves that $C(\irott F)$ is an achievable rate. 

 Given linear gaussian channel with ISI $h$ and variance $\sigma$ the capacity is $I(h,\sigma)$ if the transmitter has no CSI.
\end{proof}

For some $(h,\sigma)$ our error exponent gives a better numerical result, than the random coding error exponent derived by Gallager \cite{Gallager}, improved by Kaplan and Shamai \cite{Shamai} (the random coding error exponent used here is deduced in Section \ref{numericalr}). 

This result is not so surprising, if we know that the Maximum Mutual Information (MMI) decoder gives better exponent in some cases (like in multiaccess environment) than the random coding error exponent derived by Gallager.

This work doesn't contradict with \cite{Ziv}. We know, that in the discrete case the MMI decoder is not else than the generalised likelihood (GML) decoder \cite{Ziv}, and also in \cite{Ziv} was showed, that GML decoder is not optimal in the non-memoryless case. However this is not the case in the continuous case, where the entropy of the incoming signal depends of the used parameter $(h,\sigma)$.

\end{document}